\DeclareMathOperator{\minor}{minor}
\DeclareMathOperator{\const}{const}
\theoremstyle{plain}
\newtheorem{theorem}{Theorem}
\newtheorem{lemma}{Lemma}
\theoremstyle{remark}
\newtheorem{remark}{Remark}
\begin{document}

\title[Matrix solution to pentagon with anticommuting variables]{A matrix solution to pentagon equation with anticommuting variables}

\author{S.~I.~Bel'kov, I.~G.~Korepanov}

\begin{abstract}
We construct a solution to pentagon equation with anticommuting variables living on two-dimensional faces of tetrahedra. In this solution, matrix coordinates are ascribed to tetrahedron vertices. As matrix multiplication is noncommutative, this provides a ``more quantum'' topological field theory than in our previous works.
\end{abstract}

\keywords{Pentagon equation, topological quantum field theory, algebraic complex, torsion}

\maketitle

\section{Introduction}

Pentagon equation deals with a Pachner move $2\to 3$ and is a fundamental constituent of many topological quantum field theories (TQFT's) for three-dimensional manifolds. Pachner moves are elementary rebuildings of a manifold triangulation whose importance is due to the theorem of Pachner~\cite{pachner,lickorish}: it states (in particular) that, for a given three-dimensional manifold, any triangulation can be obtained from any other by a finite sequence of Pachner moves of just four kinds: $2\leftrightarrow 3$ and $1\leftrightarrow 4$.

Here ``$2\leftrightarrow 3$'' means the following. Let there be in the triangulation, among others, two tetrahedra having a common two-dimensional face. We denote them $1234$ and~$5123$, where $1,\dots,5$ are their vertices, so $123$ is their common 2-face. The $2\to 3$ move, by definition, replaces these two tetrahedra with three tetrahedra $1254$, $2354$ and~$3154$, occupying the same domain in the manifold. The $3\to 2$ move is the inverse operation.

As for the move $1\to 4$, it inserts a new vertex~$5$ into a tetrahedron~$1234$ and replaces it with four tetrahedra $1235$, $1425$, $1345$, and~$3245$. Move $4\to 1$ is again the inverse operation. Usually, however, the central role is played by moves $2\leftrightarrow 3$: if we have managed to do something meaningful for them, then it so happens that our construction works for moves $1\leftrightarrow 4$ ``automatically'' --- and we will meet with this exactly situation below when proving Theorem~\ref{th:i} of the present work.

As Pachner moves relate \emph{any} two triangulations of a given manifold, a quantity invariant under all Pachner moves does not depend on a specific triangulation and is thus a \emph{manifold invariant}. To be exact, this applies to piecewise-linear (PL) manifolds. In three dimensions, however, the piecewise-linear category coincides with the topological category~\cite{PL_and_Top}, so we get a topological invariant as well.

The specific mathematical sense of ``pentagon equation'' can be different in different scientific papers, the situation that is well-known also, e.g., for Yang--Baxter equation: in both cases, an equation ``with variables on the edges'', or on the 2-faces, etc., can be considered, as well as functional (or ``set-theoretic'') version of equation, and so on. We call \emph{solution to pentagon equation} any algebraic relation that can be reasonably said to correspond to a $2\to 3$ Pachner move and from which one can expect that manifold invariants can be constructed. In this paper, our solution to pentagon equation is formulated in terms of Grassmann--Berezin anticommuting variables and is, in this respect, similar to the solution in~\cite{bkm}, where it appeared as the first step to constructing a simple TQFT\footnote{Note also a long way form formula~(5) in paper~\cite{k-jnmp-2001}, which gave the origin to our research on pentagon equations, to a TQFT in papers~\cite{tqft,tqft2}.} related to group~$\mathrm{PSL}(2,\mathbb C)$.

Note that our TQFT's are \emph{finite-dimensional} in the sense that they involve no functional (infinite-dimensional) integrals: they deal with finite triangulations and ascribe to them a finite number of quantities.

The distinguishing feature of the present paper is that the parameters of the theory --- so-called ``coordinates'' ascribed to triangulation vertices --- are matrices, and the noncommutative matrix multiplication plays essential role in our pentagon equation here. This may be important, because, putting it a bit informally, any new noncommutativity makes our theories ``more quantum'' and thus removes ``classical'' features which might be present in our previous TQFT's.

There are some considerations showing that this ``more quantum'' character will manifest itself properly only in the context involving nontrivial --- and even non-abelian --- representation of the manifold's fundamental group, like it is described in papers~\cite{M1,M2}. The aim of the present short paper is, however, just to construct the solution to pentagon equation and show that it works also for moves~$1\leftrightarrow 4$, so we leave those ``quantum'' calculations for further work.

Below, we begin in Section~\ref{s:scalar} with presenting our constructions in the scalar (matrices~$1\times 1$) case, which already gives a new and elegant solution to pentagon equation. The generalization to matrices~$n\times n$ is not so straightforward, it arises from some specific algebraic complexes, introduced in Section~\ref{s:complexes}. In Section~\ref{s:genfun}, we establish the connection bewteen matrices of linear mappings in algebraic complexes and expressions in anticommuting variables. The actual solution to pentagon equation with matrices~$n\times n$ is presented in Section~\ref{s:mpen}. Then we construct in Section~\ref{s:inv} the (simplest version of) related manifold invariants; here moves $1\leftrightarrow 4$ come into play. Finally, in Section~\ref{s:d} we discuss some miracles encountered in previous sections.

\section{Solution to pentagon equation: the scalar case}
\label{s:scalar}

\subsection{Grassmann algebras and Berezin integral}

Recall~\cite{B} that \emph{Grassmann algebra} over field~$\mathbb F=\mathbb R$ or~$\mathbb C$ is an associative algebra with unity, having generators~$a_i$ and relations
$$
a_i a_j = -a_j a_i, \quad \textrm{including} \quad a_i^2 =0.
$$
Thus, any element of a Grassmann algebra is a polynomial of degree $\le 1$ in each~$a_i$.

The \emph{Berezin integral}~\cite{B} is an $\mathbb F$-linear operator in a Grassmann algebra defined by equalities
\begin{equation}
\int da_i =0, \quad \int a_i\, da_i =1, \quad \int gh\, da_i = g \int h\, da_i,
\label{iB}
\end{equation}
if $g$ does not depend on~$a_i$ (that is, generator $a_i$ does not enter the expression for~$g$); multiple integral is understood as iterated one.

\subsection{Solution to pentagon equation}

Consider a tetrahedron with vertices $i_1,i_2,\allowbreak i_3,i_4$, or simply tetrahedron~$i_1i_2i_3i_4$. We introduce a complex parameter~$\zeta_i$ for every vertex~$i$, called its ``coordinate''. These parameters are arbitrary, with the only condition that any two different vertices~$i\ne j$ have different coordinates~$\zeta_i\ne\zeta_j$. We will also use the notation
$$
\zeta_{ij} \stackrel{\rm def}{=} \zeta_i - \zeta_j.
$$

Then, we put in correspondence to any (unoriented) 2-face~$ijk$ a Grassmann generator~$a_{ijk}\; (=a_{ikj}=\dots=a_{kji})$, and to the tetrahedron~$i_1i_2i_3i_4$ --- its \emph{weight}
\begin{multline}
\mathbf f_{i_1i_2i_3i_4} \stackrel{\rm def}{=} \zeta_{i_1i_2}a_{i_1i_2i_3}a_{i_1i_2i_4} - \zeta_{i_1i_3}a_{i_1i_3i_2}a_{i_1i_3i_4} + \zeta_{i_1i_4}a_{i_1i_4i_2}a_{i_1i_4i_3} \\
+ \zeta_{i_2i_3}a_{i_2i_3i_1}a_{i_2i_3i_4} - \zeta_{i_2i_4}a_{i_2i_4i_1}a_{i_2i_4i_3} + \zeta_{i_3i_4}a_{i_3i_4i_1}a_{i_3i_4i_2} .
\label{fbf}
\end{multline}
Note that, in each summand in~\eqref{fbf}, the~$\zeta$ belongs to an edge, while the two~$a$'s --- to the two adjacent faces.

The weight~\eqref{fbf} is the simplest example of a \emph{generating function} of invariants of manifold with triangulated boundary; the invariants are the coefficients at the products of anticommuting variables. This makes, of course, little sense when the manifold is just one tetrahedron, but becomes nontrivial already in the case of clusters of two and three tetrahedra in Theorem~\ref{th:5f} below.

\begin{remark}
Expression~\eqref{fbf} changes its sign under an odd permutations of indices~$i_1,\allowbreak i_2,\allowbreak i_3,i_4$, i.e., it belongs to an \emph{oriented} tetrahedron~$i_1i_2i_3i_4$; orientation is understood as an ordering of tetrahedron vertices up to even permutations. It will be convenient for us, however, mostly to ignore the orientations in this paper and simply write the vertices in the increasing order of their numbers, like in the following Theorem~\ref{th:5f}.
\end{remark}

\begin{theorem}
\label{th:5f}
The function $\mathbf f_{i_1i_2i_3i_4}$ defined by~\eqref{fbf} satisfies the following pentagon equation (dealing with two tetrahedra $1234$ and~$1235$ in its l.h.s.\ and three tetrahedra $1245$, $2345$ and~$1345$ in its r.h.s.):
\begin{equation}
\int \mathbf f_{1234} \mathbf f_{1235} \, da_{123} = \frac{1}{\zeta_{45}} \iiint \mathbf f_{1245} \mathbf f_{2345} \mathbf f_{1345} \, da_{145}\, da_{245}\, da_{345} .
\label{5f}
\end{equation}
\end{theorem}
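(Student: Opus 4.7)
My plan is a direct Grassmann computation, cut down in size by the natural symmetry of the configuration. After integration, both sides of~\eqref{5f} are Grassmann polynomials of degree~$3$ in the six ``external'' face generators $a_{124}, a_{134}, a_{234}, a_{125}, a_{135}, a_{235}$. Of the $\binom{6}{3}=20$ cubic monomials, only the $18$ containing at least one generator indexed by~$4$ and at least one by~$5$ actually occur; the remaining two, $a_{124}a_{134}a_{234}$ and $a_{125}a_{135}a_{235}$, are obviously absent from the LHS and should appear with coefficient zero on the RHS, after a pairwise cancellation of contributions. Verifying the theorem therefore reduces to matching the coefficients of these $18$ monomials.

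For the LHS I would decompose $\mathbf f_{1234}=a_{123}A+B$ and $\mathbf f_{1235}=a_{123}A'+B'$, where $A,A'$ are linear and $B,B'$ quadratic in the remaining face generators and none of them involve $a_{123}$. Since $B,B'$ have even Grassmann degree, they commute with $a_{123}$, giving
\[
\int\mathbf f_{1234}\mathbf f_{1235}\,da_{123} = AB'+BA'.
\]
This is a sum of $9+9=18$ explicit cubic monomials: $AB'$ contributes the terms with one $4$-face and two $5$-faces, while $BA'$ contributes the terms with the opposite pattern.

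For the RHS I would perform the three Berezin integrations iteratively, at each step decomposing the current integrand as $a_{\ast}X+Y$ with $X,Y$ independent of $a_{\ast}$ and retaining~$X$, while tracking the signs produced by moving $a_{\ast}$ past odd-degree blocks. The residual $S_{\{1,2,3\}}\times S_{\{4,5\}}$ symmetry of the configuration then organises the $18$ nontrivial monomials (plus the $2$ extremal ones) into a handful of orbits, so only a small number of representative coefficients must be evaluated. The key algebraic identity that makes the two sides match is the Pl\"ucker-type relation
\[
\zeta_{ij}\zeta_{kl}-\zeta_{ik}\zeta_{jl}+\zeta_{il}\zeta_{jk}=0,
\]
valid for any four vertices $i,j,k,l$ as an immediate consequence of $\zeta_{ij}=\zeta_i-\zeta_j$; a pleasant byproduct is that the RHS numerator turns out to be divisible by $\zeta_{45}$, so the expression in~\eqref{5f} is in fact polynomial in the coordinates.

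The main obstacle is not conceptual but book-keeping. Three nested Berezin integrations generate a forest of sign choices, and aligning these with the signs arising on the LHS from the decomposition $AB'+BA'$ is where most of the effort goes. Once the signs are reconciled, each coefficient identity reduces either to the Pl\"ucker relation above or to a trivial rearrangement of telescoping differences $\zeta_i-\zeta_j$, and both the cancellation for the two extremal monomials and the divisibility of the RHS numerator by $\zeta_{45}$ follow from the same identity.
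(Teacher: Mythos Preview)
Your proposal is correct and takes essentially the same approach as the paper: direct verification by expanding both sides and comparing coefficients. The paper's own proof is the single sentence ``Formula~\eqref{5f} can be proved, e.g., by a computer calculation,'' so your outline---organising the hand computation via the decomposition $\mathbf f=a_{123}A+B$, the $S_{\{1,2,3\}}\times S_{\{4,5\}}$ symmetry, and the Pl\"ucker identity $\zeta_{ij}\zeta_{kl}-\zeta_{ik}\zeta_{jl}+\zeta_{il}\zeta_{jk}=0$---is in fact more detailed than what the paper supplies.
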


\begin{proof}
Formula~\eqref{5f} can be proved, e.g., by a computer calculation.
\end{proof}

\begin{remark}
The integration in both sides of~\eqref{5f} goes in the Grassmann variables living at the inner 2-faces of the corresponding cluster of two or three tetrahedra. The special role of edge~$45$ in~\eqref{5f}, manifested in the factor $1/\zeta_{45}$, corresponds to the fact that~$45$ is the only inner edge among the ten edges of the r.h.s.\ tetrahedra.
\end{remark}

\subsection{A tentative state-sum invariant in the scalar case}
\label{ss:renorm}

If there is a triangulated oriented manifold~$M$ with boundary, then one can construct the following function of anticommuting variables~$a_{ijk}$ living on \emph{boundary} faces (and parameters~$\zeta_i$ in vertices):
\begin{equation}
\frac{1}{\prod\nolimits'\zeta_{ij}}\, \int\!\dots\!\int \prod \mathbf f_{klmn} \prod\nolimits' da_{ijk},
\label{s}
\end{equation}
where the product $\prod\nolimits'\zeta_{ij}$ goes over all \emph{inner} edges~$ij$, the product $\prod \mathbf f_{klmn}$ --- over all tetrahedra~$klmn$, and $\prod\nolimits' da_{ijk}$ --- over all inner faces. The expression~\eqref{s} is determined up to an overall sign which may change if with change the order of the th vertices (and/or tetrahedra, differentials, etc.). It is a quite obvious consequence from Theorem~\ref{th:5f} and Remark following it that \eqref{s} is at least invariant under all Pachner moves~$2\leftrightarrow 3$ not changing the boundary.

It turns out that~\eqref{s} is already, in some cases, a working multicomponent (that is, incorporating many coefficients at various monomials in anticommuting variables) invariant. It can be called a \emph{state sum} for manifold~$M$; from a physical viewpoint, the anticommuting variables mean that this state sum is of \emph{fermionic} nature. It can be shown, however, that there are two difficulties with direct application of~\eqref{s}:
\begin{itemize}
\item if the triangulation has at least one inner (not boundary) vertex, \eqref{s} yields zero,
\item if the boundary of a connected manifold has more than one connected component, \eqref{s} also yields zero.
\end{itemize}

These are two reasons for introducing more powerful technique for obtaining manifold invariants. The third reason is that the noncommutative (matrix) generalization of weight~\eqref{fbf} is neither straightforward nor obvious. It turns out that these problems are solved by introducing new variables, united in an \emph{algebraic (chain) complex}.

\section{Algebraic complexes with matrix ``coordinates''}
\label{s:complexes}

\subsection{Explicit formulas}
\label{ss:c}

We consider a triangulated three-dimensional compact oriented connected manifold~$M$ with \emph{one-component}\footnote{The case where $\partial M$ has exactly one connected component is the easiest technically and seems to be enough for the present short paper. The complications arising when $\partial M$ is allowed to have arbitrary number $0,1,2,\dots$ of components are not very big, and such situation for a similar construction has been considered in~\cite{bkm}.} boundary~$\partial M$. We will eventually present, below in Section~\ref{s:inv}, a set of invariants, constructed for the given \emph{boundary} triangulation and depending on $n\times n$ complex matrices~$\zeta_i$ assigned to each boundary vertex~$i$; every individual invariant from the set corresponds to a certain coloring of boundary faces. Here coloring means choosing some set~$\mathcal C$ of certain differentials, this will be explained soon after formula~\eqref{c}.

We present (a simple version of) our construction of algebraic complexes, providing, in particular, the matrix generalization of weight~\eqref{fbf}. In this subsection, we present the formulas defining our algebraic complexes in the explicit form: essentially, as a sequence of three matrices~$f_2,f_3,f_4$. These formulas are well suited for computer calculations, although their form can hardly explain how they were found and for what reason our sequence~\eqref{c} of vector spaces and linear mappings is indeed an algebraic complex. This is explained in the next Subsection~\ref{ss:macro}.

We denote by~$N_k$, $k=0,1,2,3$, the number of $k$-simplexes in the triangulation, and by $N'_k$ --- the number of inner $k$-simplexes.

Then we number all vertices, in some arbitrary order, by numbers $i=1,\dots,N_0$.

Our invariants come out from algebraic (chain) complexes of the following form:
\begin{equation}
0 \longrightarrow \mathbb C^{n\cdot N'_0} \stackrel{f_2}{\longrightarrow} \mathbb C_{\mathcal C}^{2n\cdot N_3} \stackrel{f_3}{\longrightarrow} \mathbb C^{2n\cdot N_3} \stackrel{f_4}{\longrightarrow} \mathbb C^{n\cdot N'_0} \longrightarrow 0.
\label{c}
\end{equation}
We consider each vector space in~\eqref{c} as consisting of column vectors of the height equal to the exponent at~$\mathbb C$. All vector spaces have thus natural \emph{distinguished bases} consisting of vectors with one coordinate unity and all other zero\footnote{This is important when we are dealing with subject related to Reidemeister-style torsions. These will appear below in Section~\ref{s:inv}.}; thus we can, and do, identify them with their matrices.

\begin{remark}
The first nonzero mapping in~\eqref{c} is denoted~$f_2$, and not~$f_1$, in order to match our notations here with other papers, where similar but longer complexes appear, including two more mappings called $f_1$ and~$f_5$. See also the next Subsection~\ref{ss:macro}.
\end{remark}

A column vector --- element of the first (from the left) space~$\mathbb C^{n\cdot N'_0}$ is made, by definition, of $N'_0$ vectors~$dz_i$ corresponding to each inner vertex~$i$ and each having $n$ components.

The next space, $\mathbb C_{\mathcal C}^{2n\cdot N_3}$, requires a longer explanation. Let there be a 2-face~$ijk$, with $i<j<k$. To such a face corresponds, by definition, a column vector~$d\varphi_{ijk}$ of height~$n$. An element of the vector space in question consists, by definition, of all elements of all~$d\varphi_{ijk}$ corresponding to $N'_2$ \emph{inner} faces, and of some set~$\mathcal C$ of cardinality $\#\mathcal C=n\cdot(2N_3-N'_2)$ of components of~$d\varphi_{ijk}$ corresponding to boundary faces~$ijk$.

We would like, however, to define some more quantities for our further needs. We denote by~$b$ any ordered triple~$ijk$ of triangulation vertices corresponding to some 2-face in the triangulation. Here ``ordered'' means that we take them in this exact order: $i,j,k$, ignoring which of numbers~$i$, $j$ and~$k$ is smaller or greater. Now, if $i<j<k$, we set by definition
\begin{equation}
d\varphi_{i,b}\stackrel{\rm def}{=}d\varphi_b.
\label{phi}
\end{equation}
Then we define $d\varphi_{i,b}$ for any pair $i,b$ with $i\in b$ by the following conditions:
\begin{itemize}
\item if $b_2$ is obtained from $b_1$ by an odd permutation of $i,j,k$, then $d\varphi_{i,b_2}=-d\varphi_{i,b_1}$ (thus, for an \emph{even} permutation, the two $d\varphi_{i,b}$ are of course equal),
\item the following relations hold:
\begin{eqnarray}
d\varphi_{i,b}+d\varphi_{j,b}+d\varphi_{k,b}&=&0, \label{phi0} \\
\zeta_i\, d\varphi_{i,b}+\zeta_j\, d\varphi_{j,b}+\zeta_k\, d\varphi_{k,b}&=&0. \label{phi1}
\end{eqnarray}
\end{itemize}

We now pass on to the following space, also $\mathbb C^{2n\cdot N_3}$. Let there be a tetrahedron~$ijkl$, with $i<j<k<l$, also denoted by a single letter~$a$. To such a tetrahedron correspond, by definition, \emph{two} column vectors $d\psi_{i,a}$ and~$d\psi_{j,a}$, each of height~$n$. An element of the vector space consists, by definition, of all such column vectors together.

We would like, however, to define again some more quantities, namely, $d\psi_{i,a}$ for any vertex~$i$ and any tetrahedron~$a\ni i$, \emph{regardless} of condition $i<j<k<l$. We do it in analogy with what we have done for faces, by imposing the following conditions:
\begin{itemize}
\item if $a_2$ is obtained from $a_1$ by an odd permutation of $i,j,k$, then $d\psi_{i,a_2}=-d\psi_{i,a_1}$,
\item the following relations hold:
\begin{eqnarray}
d\psi_{i,a}+d\psi_{j,a}+d\psi_{k,a}+d\psi_{l,a}&=&0, \label{psi0} \\
\zeta_i\, d\psi_{i,a}+\zeta_j\, d\psi_{j,a}+\zeta_k\, d\psi_{k,a}+\zeta_l\, d\psi_{l,a}&=&0. \label{psi1}
\end{eqnarray}
\end{itemize}

Finally, an element of the last space~$\mathbb C^{n\cdot N'_0}$ is similar to that in the first space: it consists of $N'_0$ vectors~$d\chi_i$ corresponding to each inner vertex~$i$ and each having $n$ components.

We define linear mappings~$f_2$, $f_3$ and~$f_4$ as follows.

\begin{itemize}
\item $f_2$, by definition, makes the following $d\varphi_{ijk}$ from given~$dz_i$:
\begin{equation}
d\varphi_{ijk}=(\zeta_i-\zeta_j)^{-1}(dz_i-dz_j)-(\zeta_i-\zeta_k)^{-1}(dz_i-dz_k).
\label{f2}
\end{equation}
\item $f_3$, by definition, makes the following $d\psi_{i,a}$ and $d\psi_{j,a}$, where $a=ijkl$, $i<j<k<l$, from given~$d\varphi$'s:
\begin{equation}
d\psi_{i,a}=d\varphi_{i,ijk}+d\varphi_{i,ikl}+d\varphi_{i,ilj}, \qquad
d\psi_{j,a}=d\varphi_{j,ijk}+d\varphi_{j,ilj}+d\varphi_{j,jlk},
\label{f3}
\end{equation}
where the $d\varphi$'s in the r.h.s.\ are of course calculated using \eqref{phi}, \eqref{phi0} and~\eqref{phi1}.
\item $f_4$, by definition, makes the following $d\chi_{i}$ from given~$d\varphi_{ijk}$:
\begin{equation}
d\chi_{i}=\sum_a d\psi_{i,a},
\label{f4}
\end{equation}
with the sum taken over all tetrahedra~$a$ surrounding the given vertex~$i$ and taken all with \emph{positive} orientation; the $d\psi$'s in~\eqref{f4} are calculated , if necessary, using formulas \eqref{psi0} and~\eqref{psi1}.
\end{itemize}

\begin{remark}
\label{r:c}
Matrix~$f_3$ depends thus on the chosen set~$\mathcal C$ of components of~$d\varphi_{ijk}$ corresponding to boundary faces~$ijk$, as explained above. All such matrices are, obviously, submatrices of matrix~$f_3^{\rm full}$ incorporating all rows corresponding to \emph{all} components of~$d\varphi_{ijk}$. Matrix~$f_3^{\rm full}$ acts thus from $\mathbb C^{n\cdot N_2}$ to~$\mathbb C^{2n\cdot N_3}$, we will make use of it below in Section~\ref{s:mpen}.
\end{remark}

\begin{theorem}
\label{th:c}
The sequence~\eqref{c} of vector spaces and linear mappings is indeed an algebraic complex, i.e.,
\begin{equation}
f_3\circ f_2=0\quad\hbox{and}\quad f_4\circ f_3=0.
\label{ff}
\end{equation}
\end{theorem}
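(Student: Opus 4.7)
The plan is to introduce a compact closed-form for $d\varphi_{p,b}$ valid for every vertex $p\in b$ and every ordering of~$b$, and then reduce both identities in \eqref{ff} to short telescoping and orientation-cancellation arguments. Abbreviate
$$\phi_{pq}\stackrel{\rm def}{=}(\zeta_p-\zeta_q)^{-1}(dz_p-dz_q);$$
one checks at once that $\phi_{qp}=\phi_{pq}$ (the two sign flips cancel) and that $(\zeta_p-\zeta_q)\phi_{pq}=dz_p-dz_q$. My candidate universal formula is
$$d\varphi_{p_m,\,p_1p_2p_3}=\phi_{p_mp_{m+1}}-\phi_{p_mp_{m-1}}\qquad(m\bmod 3).$$
For $m=1$ and $p_1<p_2<p_3$ this reduces to \eqref{f2}; antisymmetry in permutations of $(p_1,p_2,p_3)$ is manifest; the symmetry of~$\phi$ makes $\sum_m d\varphi_{p_m,b}=0$ cancel pairwise (giving \eqref{phi0}); and $\sum_m\zeta_{p_m}\,d\varphi_{p_m,b}$ collects into a cyclic sum of differences $(\zeta_{p_i}-\zeta_{p_j})\phi_{p_ip_j}=dz_{p_i}-dz_{p_j}$ that telescopes to zero, giving \eqref{phi1}.

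Plugging the universal formula into \eqref{f3} makes $f_3\circ f_2=0$ immediate. For $a=ijkl$ with $i<j<k<l$,
$$d\psi_{i,a}=(\phi_{ij}-\phi_{ik})+(\phi_{ik}-\phi_{il})+(\phi_{il}-\phi_{ij})=0,$$
and, with $j$ sitting in position $2$, $3$, $1$ of the three faces $ijk$, $ilj$, $jlk$ respectively,
$$d\psi_{j,a}=(\phi_{jk}-\phi_{ij})+(\phi_{ij}-\phi_{jl})+(\phi_{jl}-\phi_{jk})=0.$$

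For $f_4\circ f_3=0$ I would extend \eqref{f3} to the universal expression
$$d\psi_{p,a}=-\sum_{\substack{b\subset a\\ p\in b}}\varepsilon(b,\partial a)\,d\varphi_{p,b},$$
where $\varepsilon(b,\partial a)=\pm 1$ is the sign with which the sorted face~$b$ appears in the oriented boundary of~$a$. A direct check shows this reproduces \eqref{f3} for $p=i,j$; moreover, each face~$b$ contributes $-\varepsilon(b,\partial a)\sum_{p\in b}d\varphi_{p,b}=0$ to $\sum_p d\psi_{p,a}$ by \eqref{phi0}, and similarly zero to the $\zeta$-weighted sum by \eqref{phi1}. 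Hence the formula is compatible with \eqref{psi0}, \eqref{psi1}, and therefore also delivers the correct values of $d\psi_{p,a}$ at the two largest vertices of~$a$ (where \eqref{f3} does not define it directly). Swapping the order of summation,
$$d\chi_p=\sum_a d\psi_{p,a}=-\sum_{b\ni p}d\varphi_{p,b}\sum_{a\supset b,\;a\ni p}\varepsilon(b,\partial a).$$
Because $p$ is inner, every $b\ni p$ is an inner face shared by exactly two tetrahedra, both positively oriented in~$M$; those two tetrahedra induce opposite orientations on~$b$, so the inner sum vanishes and $d\chi_p=0$.

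The only step that really needs care is the sign bookkeeping in the universal formula for $d\psi_{p,a}$ — matching its signs to \eqref{f3} and to \eqref{psi0}, \eqref{psi1} — because the matrix nature of the $\zeta_i$ leaves no room to rearrange factors at will. Once that is done, $f_3\circ f_2=0$ is a one-line telescoping in the variables $\phi_{pq}$, and $f_4\circ f_3=0$ is the familiar boundary-of-boundary cancellation at an inner vertex.
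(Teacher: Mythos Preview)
Your argument is correct. The closed form $d\varphi_{p_m,p_1p_2p_3}=\phi_{p_mp_{m+1}}-\phi_{p_mp_{m-1}}$ does satisfy \eqref{phi}, \eqref{phi0} and \eqref{phi1} (the key point for \eqref{phi1} being that the matrix $\zeta_{pq}$ multiplies $\zeta_{pq}^{-1}$ on the correct side), and the telescoping for $d\psi_{i,a}$, $d\psi_{j,a}$ is clean. Your universal formula $d\psi_{p,a}=-\sum_{b}\varepsilon(b,\partial a)\,d\varphi_{p,b}$ matches \eqref{f3} on the two smallest vertices, satisfies \eqref{psi0}--\eqref{psi1} via \eqref{phi0}--\eqref{phi1}, and hence coincides with the paper's extension to the remaining two vertices because that $2n\times 2n$ linear system is uniquely solvable when $\zeta_k-\zeta_l$ is invertible. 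The boundary-of-boundary argument for $d\chi_p$ then goes through, since every face through an inner vertex is inner and is shared by exactly two coherently oriented tetrahedra.

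The paper's own proof is literally ``direct calculation''; the conceptual explanation it offers in Subsection~\ref{ss:macro} is different from yours: there the maps $f_2,f_3,f_4$ arise as differentials of nonlinear maps $F_2,F_3,F_4$ satisfying $F_{i+1}\circ F_i=\const$, whence $dF_{i+1}\,dF_i=0$ by the chain rule. Your route is more elementary and fully self-contained --- it never invokes the nonlinear maps $F_i$ nor the ``initial values'' $z_i=\zeta_i$ --- and it makes transparent that $f_3\circ f_2=0$ is a pure telescoping while $f_4\circ f_3=0$ is the usual $\partial\partial=0$ mechanism at an interior vertex. The paper's viewpoint, on the other hand, explains \emph{where} formulas \eqref{f2}--\eqref{f4} come from and suggests how to generate further complexes of the same type.
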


\begin{proof}
Theorem~\ref{th:c} can be proved by a direct calculation.
\end{proof}

\subsection{The mathematical origins of complex~(\ref{c})}
\label{ss:macro}

The proof of Theorem~\ref{th:c} by means of direct calculation does not make clear the mathematical reasons ensuring that~\eqref{c} is a complex. So, in this subsection we briefly explain the mathematical origins\footnote{In this connection, see Acknowledgements in the end of the paper.} of complex~\eqref{c}. Namely, the linear mappings $f_2$, $f_3$ and~$f_4$ appear as \emph{differentials}~$dF_2$, $dF_3$ and~$dF_4$ (with some modifications/refinements if necessary) of some mappings~$F_i$ forming a sort of ``nonlinear complex'' in the sense that
\begin{equation}
F_{i+1}\circ F_i = \const .
\label{FF}
\end{equation}
Then it obviously follows from~\eqref{FF} that
$$
dF_{i+1}\, dF_i = 0 ,
$$
to be compared with~\eqref{ff}.

There are actually five mappings~$F_i$: $i=1,\dots,5$. But, as Theorem~\ref{th:c} is already proved, and here we just want to give an idea of where the formulas in Subsection~\ref{ss:c} come from, we restrict ourselves to presenting first three of~$F_i$, leaving $F_4$ and~$F_5$ as an exercise for an interested reader.

Let there be a triangulated three-dimensional manifold, with fixed $n\times n$ matrices~$\zeta_i$ assigned to each triangulation vertex~$i=1,\dots,N_0$.

\begin{itemize}

\item By definition, $F_1$ takes a pair $(a,b)$ of $n\times n$ matrices to
\begin{equation}
z_i=\zeta_i a+b
\label{F1}
\end{equation}
for each vertex~$i$ in the triangulation. Thus, $a$ and~$b$ are, essentially, parameters of the \emph{group of affine transformations} of $n\times n$ matrix algebra. The corresponding tangent mapping~$f_1=dF_1$ does not appear in our complex~\eqref{c}, but will appear in its more general versions.

\item Mapping $F_2$ takes, by definition, matrices~$z_1$ in vertices to matrices
\begin{equation}
\varphi_{ijk}=\zeta_{ij}^{-1} z_{ij} z_{ik}^{-1} \zeta_{ik} 
\label{F2}
\end{equation}
for all two-faces $ijk$ with $i<j<k$. Here and below we use notations
\begin{equation*}
\zeta_{ij}\stackrel{\rm def}{=}\zeta_i-\zeta_j,\quad z_{ij}\stackrel{\rm def}{=}z_i-z_j, \hbox{ etc.}
\end{equation*}
Note that for ``initial'' values $z_{\dots}=\zeta_{\dots}$, we have $\varphi=1$. Obviously, the first of equalities~\eqref{FF}, namely $F_2\circ F_1=\const $, holds. 

The linear mapping~$f_2$ in our specific complex~\eqref{c} is obtained, first, by differentiating formula~\eqref{FF} with respect to~$z_i$ for \emph{inner} vertices~$i$ at their ``initial'' values $z_i=\zeta_i$, while $z_i=\zeta_i$ for boundary~$i$ stay constant. This gives formula~\eqref{f2}, where both $dz_i$ and $d\varphi_{ijk}$ are, at this moment, $n\times n$ matrices. As, however, \eqref{f2} obviously operates with each column of~$dz_i$ and corresponding column of~$d\varphi_{ijk}$ separately, we then consider just one (e.g., first) column in both matrices, leaving for this column, a bit loosely, the same respective notation $dz_i$ or~$d\varphi_{ijk}$.

\item Mapping $F_3$, by definition, takes matrices~$\varphi_{ijk}$ to matrices~$\psi_{i,a}$ associated with every tetrahedron $a=ijkl$ and its vertex~$i$, according to the following formula:
\begin{equation}
\psi_{i,a}=\varphi_{ijk}\varphi_{ikl}\varphi_{ilj}. 
\label{F3}
\end{equation}
Here, the $\varphi$'s for \emph{any} order of their vertices are calculated according to formulas
\begin{eqnarray*}
&&\varphi_{ikj}=\varphi_{ijk}^{-1},\\
&& \zeta_{ij}^{-1}\zeta_{ik}-\varphi_{ijk}=-\zeta_{ji}^{-1}\zeta_{jk} \varphi_{kij}^{-1}, 
\label{e:varphi:permutations}
\end{eqnarray*}
which is in agreement with~\eqref{F2}.

Again, it is quite obvious that $F_3\circ F_2=\const $. Formula~\eqref{f3} is obtained from~\eqref{F3} by differentiating, again at the initial values $z_i=\zeta_i$, $\varphi_{ijk}=1$, and then taking single columns in place of matrix differentials, like we did it for matrix~$f_2$.

\end{itemize}

\section{Generating functions of Grassmann variables for rectangular matrices}
\label{s:genfun}

We now want to link matrices (having in mind mostly matrix~$f_3$ from Subsection~\ref{ss:c}) to functions of Grassmann variables.

Let $A$~be an arbitrary matrix whose entries are complex numbers or complex-valued expressions, with the only condition that the number of columns is not smaller than the number of rows\footnote{Note that here we have swapped the roles of rows and columns with respect to paper~\cite{bkm}!}.

With each column~$k$ of~$A$, we associate a Grassmann generator~$a_k$, while with the whole matrix~$A$ --- the \emph{generating function} defined as
\begin{equation}
\mathbf f_A = \sum_{\mathcal C} \det A|_{\mathcal C} \prod_{k\in \mathcal C} a_k,
\label{genfun}
\end{equation}
where $\mathcal C$ runs over all subsets of the set of columns of the cardinality equal to the number of rows; $A|_{\mathcal C}$ is the square submatrix of~$A$ containing all columns in~$\mathcal C$; the order of~$a_k$ in the product is the same as the order of columns in~$A|_{\mathcal C}$ (e.g., the most natural --- increasing --- order of~$k$'s in both).

\begin{lemma}
\label{l:mult}
Let $C$ be the vertical concatenation of matrices $A$ and~$B$ having the equal number of rows: 
$C=\left(\begin{matrix} A \\ B \end{matrix}\right)$.
Then
$$
\mathbf f_C = \mathbf f_A \mathbf f_B .
$$
\end{lemma}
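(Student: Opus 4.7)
My plan is to reduce this to the classical Laplace (generalised cofactor) expansion of a determinant along a set of rows. Write $A$ with $m$ rows, $B$ with $p$ rows, and both with $n \ge m+p$ columns, so $C$ is $(m+p)\times n$. According to~\eqref{genfun} the generating function of $C$ is a sum over all subsets $\mathcal C \subset \{1,\dots,n\}$ with $|\mathcal C| = m+p$, weighted by $\det C|_{\mathcal C}$ and the monomial $\prod_{k \in \mathcal C} a_k$ (in increasing order). The Laplace expansion of $\det C|_{\mathcal C}$ along its first $m$ rows reads
\[
\det C|_{\mathcal C} = \sum_{\mathcal C = \mathcal C_A \sqcup \mathcal C_B, \, |\mathcal C_A|=m} \varepsilon(\mathcal C_A,\mathcal C_B) \, \det A|_{\mathcal C_A} \, \det B|_{\mathcal C_B},
\]
where $\varepsilon(\mathcal C_A,\mathcal C_B)$ is the sign of the shuffle that merges the increasing sequences from $\mathcal C_A$ and $\mathcal C_B$ into the increasing sequence of $\mathcal C_A \cup \mathcal C_B$.

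On the other hand, expanding the product
\[
\mathbf f_A \mathbf f_B = \sum_{\mathcal C_A, \mathcal C_B} \det A|_{\mathcal C_A}\, \det B|_{\mathcal C_B} \, \Bigl(\prod_{k\in \mathcal C_A} a_k\Bigr)\Bigl(\prod_{k\in \mathcal C_B} a_k\Bigr),
\]
the Grassmann anticommutativity $a_k^2 = 0$ kills every term with $\mathcal C_A \cap \mathcal C_B \neq \emptyset$, and for disjoint $\mathcal C_A, \mathcal C_B$ the product of $a_k$'s equals $\varepsilon(\mathcal C_A,\mathcal C_B)\prod_{k\in \mathcal C_A \cup \mathcal C_B} a_k$ by definition of the shuffle sign. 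Substituting this and collecting by $\mathcal C = \mathcal C_A \sqcup \mathcal C_B$ reproduces exactly the Laplace expansion above, so $\mathbf f_A \mathbf f_B = \mathbf f_C$.

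The only real step that requires care is the sign bookkeeping: one must check that the sign produced by permuting the Grassmann generators into increasing order is the \emph{same} shuffle sign appearing in the classical Laplace formula. This is a routine verification (in both cases the sign counts inversions between elements of $\mathcal C_A$ and elements of $\mathcal C_B$), but it is the only place where anything nontrivial happens. Once this is settled, the lemma follows at once.
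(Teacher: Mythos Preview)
Your proof is correct and follows essentially the same route as the paper: the paper's proof simply invokes the Laplace expansion $\minor C = \sum \pm \minor A \,\minor B$ for each maximal minor of~$C$, which is exactly what you spell out in detail. The only difference is that you make the sign bookkeeping explicit by identifying the shuffle sign with the inversion count, whereas the paper leaves this hidden in the ``$\pm$''.
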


\begin{proof}
The lemma easily follows from the expansion of the form
\begin{equation}
\minor C = \sum \pm \minor A \, \minor B,
\label{pm}
\end{equation}
known from linear algebra, for every minor of~$C$ having the full number of columns.
\end{proof}

Let there be now a subset~$\mathcal I$ in the set of all columns of~$A$. We call the columns in~$\mathcal I$ \emph{inner}, the rest of them --- \emph{outer}, and we define the \emph{generating function of matrix~$A$ with the set~$\mathcal I$ of inner columns} as
\begin{equation}
{}_{\mathcal I}\mathbf f_A = \sum_{\mathcal C \supset \mathcal I} \det\nolimits' A|_{\mathcal C} \prod_{k\in \mathcal C \setminus \mathcal I} a_k.
\label{genfunI}
\end{equation}
Here $\det\nolimits'$ means that, unlike in~\eqref{genfun}, we are changing the order of $A$'s columns in the following way: all inner columns are brought to the right of the matrix; the order of columns within the set~$\mathcal I$ and its complement is conserved; the order of~$a_k$'s in the product (where $k$ belongs to the mentioned complement) is the same as the order of columns~$k$.

\begin{lemma}
\label{l:int}
The generating function of matrix~$A$ with the set~$\mathcal I$ of inner columns is the following Berezin integral of the usual generating function:
\begin{equation}
{}_{\mathcal I}\mathbf f_A = \int\!\dots\!\int \mathbf f_A \prod_{l\in\mathcal I} da_l,
\label{f_inner}
\end{equation}
where the differentials are written in the same\footnote{We adopt the convention that the multiple Berezin integral is calculated following the rule $\iint f(a) g(b) \, da \, db = \int f(a)\, da \cdot \int g(b)\, db $. This convention seems most commonly accepted. Note that we were using a different covention in~\cite{bkm}, with differentials in a multiple integral written in the reverse order --- hence the difference between \eqref{f_inner} and \cite[formula~(52)]{bkm}.} order as rows in~$A$.
\end{lemma}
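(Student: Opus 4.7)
My plan is to prove the identity term-by-term, expanding $\mathbf f_A$ via its definition \eqref{genfun}, applying the Berezin integral, and matching the resulting signs with those implicit in $\det\nolimits'$ in the definition \eqref{genfunI} of ${}_{\mathcal I}\mathbf f_A$. The core observation is that the same permutation governs two reorderings: the reordering of Grassmann generators needed to perform the integral, and the reordering of columns of $A|_{\mathcal C}$ built into $\det\nolimits'$. Both produce the same sign, so they match.

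In detail, I would substitute $\mathbf f_A = \sum_{\mathcal C} \det A|_{\mathcal C} \prod_{k\in\mathcal C} a_k$ into the right-hand side of \eqref{f_inner} and use $\mathbb F$-linearity to pull the sum outside. By the first of the rules \eqref{iB}, a summand can be nonzero only if every generator $a_l$ with $l\in\mathcal I$ actually appears in the monomial $\prod_{k\in\mathcal C} a_k$, that is, only if $\mathcal C\supset\mathcal I$. For such a $\mathcal C$, write $\mathcal O=\mathcal C\setminus\mathcal I$ and let $\pi$ be the permutation of $\mathcal C$ that sends the natural (increasing) ordering to the concatenation ``$\mathcal O$ in increasing order, followed by $\mathcal I$ in increasing order''. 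Anticommutativity of the $a_k$ gives
\[
\prod_{k\in\mathcal C} a_k \;=\; \operatorname{sgn}(\pi)\,\Bigl(\prod_{k\in\mathcal O} a_k\Bigr)\Bigl(\prod_{l\in\mathcal I} a_l\Bigr),
\]
with both inner products taken in increasing order.

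Next I would evaluate $\int\!\dots\!\int \prod_{l\in\mathcal I} a_l \prod_{l\in\mathcal I} da_l$ using the iterative convention stated in the lemma's footnote: the innermost (rightmost) differential is integrated first, peeling off the rightmost $a_l$ via the third rule in \eqref{iB}, and this proceeds cleanly since each $a_l$ to be integrated already sits at the right of the remaining monomial. The result is $1$, with no additional sign. Finally, I would observe that the same permutation $\pi$, applied to the columns of $A|_{\mathcal C}$, converts $\det A|_{\mathcal C}$ into $\det\nolimits' A|_{\mathcal C}$ (up to the same sign): more precisely, $\det\nolimits' A|_{\mathcal C}=\operatorname{sgn}(\pi)\det A|_{\mathcal C}$. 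Combining, each nonvanishing summand contributes exactly $\det\nolimits' A|_{\mathcal C}\prod_{k\in\mathcal C\setminus\mathcal I}a_k$, and summing over $\mathcal C\supset\mathcal I$ reproduces ${}_{\mathcal I}\mathbf f_A$ as defined in \eqref{genfunI}.

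The only delicate point is keeping the sign conventions straight: both the Berezin integration order and the column reordering inside $\det\nolimits'$ produce signs, and one must verify they compound to give no net sign. I would sanity-check this on a small example (say, $A$ of size $2\times 3$ with $\mathcal I=\{2\}$) before writing out the general argument, since a miscounted sign anywhere would invalidate the proof; everything else is a routine expansion.
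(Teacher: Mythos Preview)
Your proposal is correct and follows essentially the same approach as the paper's own proof: both arguments observe that only terms with $\mathcal C\supset\mathcal I$ survive the Berezin integration, and that the sign incurred by moving the inner generators~$a_l$ to the right of the monomial coincides with the sign incurred by moving the corresponding columns to the right of~$A|_{\mathcal C}$ (the paper phrases this as ``none of them is changed if we bring both the columns~$\dots$ and the corresponding generators~$\dots$ to the right,'' with a footnote noting the cancellation of signs). Your version simply makes the common permutation~$\pi$ explicit, which is a harmless and arguably clearer way to record the same observation.
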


\begin{proof}
First, we note that only those terms in~$\mathbf f_A$ survive the integration in the r.h.s.\ of~\eqref{f_inner} which contain all the~$a_k$ for~$k\in \mathcal I$. We take the function~$\mathbf f_A$ as defined in~\eqref{genfun}, leave only the mentioned terms in it, and note that none of them is changed if we bring both the columns~$k$ in~$A$ for all~$k\in \mathcal I$ to the right of the matrix and the corresponding generators~$a_k$ to the right in the product\footnote{because any elementary permutation of columns brings a minus sign which cancels out with the minus brought by the corresponding permutation of~$a_k$'s}, neither changing the order within~$\mathcal I$ nor within its complement. Then, the integration in~\eqref{f_inner} just takes away the~$a_k$ for~$k\in \mathcal I$, as required.
\end{proof}

\section{Pentagon equation with matrix coordinates}
\label{s:mpen}

\subsection{The pentagon equation}
\label{ss:p}

It turns out that the matrix version of tetrahedron weight~\eqref{fbf}, satisfying the (matrix version of) pentagon equation, can be constructed as the generating function for matrix~$f_3^{\rm full}$ (see Remark~\ref{r:c}) corresponding to just one tetrahedron~$a=i_1i_2i_3i_4$ considered as a manifold with boundary\footnote{Note that, as it has no inner vertices, there are no matrices $f_2$ and $f_4$ in complex~\eqref{c} written for a single tetrahedron, in the sense that one of dimensions in both matrices is zero. The same applies to the l.h.s.\ and r.h.s.\ of Pachner move $2\to 3$ below in Lemma~\ref{l:23}.}. This matrix~$f_3^{\rm full}$ can be calculated using \eqref{f3}, \eqref{phi0} and~\eqref{phi1}, and reads:
\begin{equation}
f_3^{\rm full} =  \begin{pmatrix} \mathbf 1 & \mathbf{-1} & \mathbf 1 & \mathbf 0 \\
\zeta_{i_2i_3}^{-1}\zeta_{i_3i_1} & -\zeta_{i_2i_4}^{-1}\zeta_{i_4i_1} & \mathbf 0 & \mathbf{-1} \end{pmatrix} .
\label{f3t}
\end{equation}
Matrix~\eqref{f3t} is of course a block matrix, with $\mathbf 0$ and~$\mathbf 1$ meaning the $n\times n$ zero and unity matrices, respectively. The block rows of matrix~\eqref{f3t} correspond to differentials $d\psi_{i_1,a}$ and~$d\psi_{i_2,a}$, while the columns --- to $d\varphi_{i_1i_2i_3}$, $d\varphi_{i_1i_2i_4}$, $d\varphi_{i_1i_3i_4}$, and~$d\varphi_{i_2i_3i_4}$ (in the natural order in both cases). We denote~$\mathbf f_{i_1i_2i_3i_4}$ the generating function~\eqref{genfun} for matrix~\eqref{f3t}; we do not write it out here because, as computer calculation shows, it contains 60 nonzero monomials already in the case $n=2$.

To each $d\varphi_{ijk}$ corresponds thus a \emph{vector}~$a_{ijk}$ of $n$ anticommuting variables.

\begin{theorem}
\label{th:mpen}
The generating functions~$\mathbf f_{i_1i_2i_3i_4}$ for matrices~\eqref{f3t} satisfy the following pentagon equation:
\begin{equation}
\frac{\det\zeta_{23}}{\det\zeta_{34}\,\det\zeta_{35}} \underbrace{\int\!\dots\!\int}_{n} \mathbf f_{1234} \mathbf f_{1235} \,\mathcal Da_{123} = \frac{1}{\det\zeta_{45}} \underbrace{\int\!\dots\!\int}_{3n} \mathbf f_{1245} \mathbf f_{2345} \mathbf f_{1345} \,\mathcal Da_{145} \,\mathcal Da_{245} \,\mathcal Da_{345} ,
\label{pmatr}
\end{equation}
where $\mathcal Da_{i_1i_2i_3}$ means the product of all components of~$da_{i_1i_2i_3}$, taken in their natural order.
\end{theorem}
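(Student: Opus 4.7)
The plan is to recast both sides of \eqref{pmatr} as generating functions of rectangular matrices via Lemmas \ref{l:mult} and \ref{l:int}, reducing the pentagon equation to a linear-algebra identity among minors of concatenated $f_3^{\rm full}$ blocks.

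First, I would extend each one-tetrahedron block \eqref{f3t} by zero columns so that, within a cluster, all tetrahedra index their columns by the same set of two-faces of that cluster. The LHS cluster $\{1234, 1235\}$ has $7$ two-faces and yields a $4n\times 7n$ matrix $C_L$; the RHS cluster $\{1245, 2345, 1345\}$ has $9$ two-faces and yields a $6n\times 9n$ matrix $C_R$. By Lemma \ref{l:mult}, $\mathbf f_{1234}\mathbf f_{1235} = \mathbf f_{C_L}$ and $\mathbf f_{1245}\mathbf f_{2345}\mathbf f_{1345} = \mathbf f_{C_R}$.

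Next, by Lemma \ref{l:int}, the Berezin integrations on the two sides produce the restricted generating functions ${}_{\mathcal I_L}\mathbf f_{C_L}$ and ${}_{\mathcal I_R}\mathbf f_{C_R}$, where $\mathcal I_L$ consists of the $n$ columns for the unique inner face $123$ and $\mathcal I_R$ of the $3n$ columns for the inner faces $145, 245, 345$. Since the two triangulations share the same six boundary faces $124, 125, 134, 135, 234, 235$, the pentagon equation \eqref{pmatr} becomes the assertion that, for every subset $\mathcal B$ of $3n$ boundary columns,
\[
\frac{\det\zeta_{23}}{\det\zeta_{34}\,\det\zeta_{35}}\, \det\nolimits' C_L|_{\mathcal I_L \cup \mathcal B} \;=\; \frac{1}{\det\zeta_{45}}\, \det\nolimits' C_R|_{\mathcal I_R \cup \mathcal B}.
\]
Thus the Grassmann statement collapses to a finite family of scalar identities among $3n\times 3n$ minors.

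The central step is to verify this minor identity. I would perform block Gaussian elimination using the inner columns in $\mathcal I_L$ and $\mathcal I_R$ as pivots. After elimination, the boundary columns of $C_L$ and $C_R$ should represent the \emph{same} linear map from boundary-face differentials into an appropriate quotient, up to the stated scalar ratio; the structural reason is that both matrices describe $f_3$ for a common 5-vertex cluster (a triangulated bipyramid with boundary), so the relevant kernel and image of $f_3$ are triangulation-independent, and Theorem \ref{th:c} (together with the nonlinear origin $F_3\circ F_2=\const$ of Subsection~\ref{ss:macro}) governs how the pivot eliminations match up across the move. The scalar factor $\det\zeta_{23}\det\zeta_{45}/(\det\zeta_{34}\det\zeta_{35})$ arises naturally as the ratio of the determinants of the pivot blocks produced by the two row reductions, which are built from the $\zeta_{ij}^{-1}\zeta_{kl}$ entries of \eqref{f3t}.

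The main obstacle is the non-commutative bookkeeping. Since the blocks of $f_3^{\rm full}$ are products of inverses of non-commuting $\zeta_{ij}$ matrices, the block elimination must be performed in a definite order, and the cancellations that eventually produce a clean scalar ratio depend on that order. As with Theorem \ref{th:5f}, once the reduction is set up, the identity of the remaining $3n\times 3n$ minors is an explicit matrix identity in the generic $n\times n$ matrix algebra; it can be verified either by a direct calculation exploiting the telescoping structure of the $\zeta$-products, or, more robustly, by a symbolic computer computation in the spirit of the proof of Theorem~\ref{th:5f}.
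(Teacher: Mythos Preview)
Your reduction via Lemmas~\ref{l:mult} and~\ref{l:int} to a family of minor identities is exactly the paper's Lemma~\ref{l:23} and the sentence following it, so the overall architecture matches. The one place where the paper is sharper is the proportionality step: rather than running block Gaussian elimination and appealing to a computation, the paper observes directly that the conditions on the boundary~$d\varphi$'s for membership in $\ker f_3$ modulo inner columns are expressed purely in terms of the boundary triangulation (the ``going around each boundary vertex'' relations~\eqref{*}), hence are literally the same system for $(f_3^{\rm full})_{\rm l.h.s.}$ and $(f_3^{\rm full})_{\rm r.h.s.}$. This immediately says that every boundary column is the \emph{same} linear combination of a fixed set of $3n$ boundary columns modulo inner columns on both sides, which forces all corresponding maximal minors to be proportional with a single ratio; one then computes that ratio on a single convenient pair of minors. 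Your triangulation-independence remark is the right idea, but the paper's formulation makes it a two-line linear-algebra fact rather than an elimination procedure followed by a symbolic check. (Also a small bookkeeping point: before any reduction the minors being compared are $4n\times 4n$ on the left and $6n\times 6n$ on the right, not $3n\times 3n$; only after pivoting on the inner columns do they collapse to effective $3n\times 3n$ Schur complements.)
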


\begin{proof}
We first prove the following lemma.

\begin{lemma}
\label{l:23}
The $n$-fold and $3n$-fold integrals in the l.h.s.\ and r.h.s.\ of~\eqref{pmatr} are generating functions for matrices~$f_3^{\rm full}$ corresponding to the l.h.s.\ and r.h.s.\ of Pachner move $2\to 3$, respectively, considered as triangulated manifolds with boundary.
\end{lemma}

\begin{proof}
Both these matrices~$f_3^{\rm full}$ are vertical concatenations of matrices~\eqref{f3t} extended with necessary columns, corresponding to 2-faces absent from the given tetrahedron and filled with zeros. For instance, here is the matrix~$f_3^{\rm full}$ for the l.h.s.\ of move $2\to 3$:
\begin{equation}
(f_3^{\rm full})_{\rm l.h.s.} =
\begin{pmatrix} \mathbf 1 & \mathbf{-1} & \mathbf 0 & \mathbf 1 & \mathbf 0 & \mathbf 0 & \mathbf 0 \\
\zeta_{23}^{-1}\zeta_{31} & -\zeta_{24}^{-1}\zeta_{41} & \mathbf 0 & \mathbf 0 & \mathbf 0 & \mathbf{-1} & \mathbf 0 \\
\mathbf 1 & \mathbf 0 & \mathbf{-1} & \mathbf 0 & \mathbf 1 & \mathbf 0 & \mathbf 0 \\
\zeta_{23}^{-1}\zeta_{31} & \mathbf 0 & -\zeta_{25}^{-1}\zeta_{51} & \mathbf 0 & \mathbf 0 & \mathbf 0 & \mathbf{-1}  \end{pmatrix} .
\label{4*7}
\end{equation}

Like in \eqref{f3t}, every element of matrix~\eqref{4*7} is a matrix of sizes $n\times n$. The first block column in matrix~\eqref{4*7} corresponds to the ($n$ components of) differential~$d\varphi_{123}$ at the \emph{inner} face~$123$, the rest of columns --- to the following \emph{boundary} faces, from left to right: $d\varphi_{124}$, $d\varphi_{125}$, $d\varphi_{134}$, $d\varphi_{135}$, $d\varphi_{234}$, and~$d\varphi_{235}$. The rows correspond to $d\psi_{1,1234}$, $d\psi_{2,1234}$, $d\psi_{1,1235}$, and~$d\psi_{2,1235}$.

We do not write out here the matrix corresponding to the r.h.s. of the Pachner move. It is made in the same obvious manner and contains $6\times 9$ block entries.

The statement of the lemma follows now from Lemmas \ref{l:mult} and~\ref{l:int}.
\end{proof}

Now we continue with the proof of Theorem~\ref{th:mpen}. It remains to prove that the minors of matrix~$(f_3^{\rm full})_{\rm l.h.s.}$ and the similar $6\times 9$ block matrix~$(f_3^{\rm full})_{\rm r.h.s.}$, corresponding to the r.h.s.\ of the Pachner move $2\to 3$, are proportional with the same ratio as the integrals in both sides of~\eqref{pmatr}, provided these minors contain all the rows of the corresponding matrix, all the columns corresponding to~$d\varphi$'s at inner faces (or simply ``inner~$d\varphi$''), and the other columns in two minors correspond to the same~$d\varphi$'s at boundary faces (or simply ``boundary~$d\varphi$'').

Let $f_3$ denote, in the rest of this proof, any of $(f_3^{\rm full})_{\rm l.h.s.}$ and~$(f_3^{\rm full})_{\rm r.h.s.}$. Consider the following question: what conditions must be imposed on boundary~$d\varphi$ in order that there exist some inner~$d\varphi$ such that the vector composed of all these (inner and boundary) $d\varphi$ belong to the kernel of~$f_3$? Formulas~\eqref{f3}, together with \eqref{psi0} and~\eqref{psi1}, make it clear that these conditions on $d\varphi_{124},\dots,d\varphi_{235}$ can be written as
\begin{equation}
\left. \begin{array}{r}
d\varphi_{1,124}+d\varphi_{1,143}+d\varphi_{1,135}+d\varphi_{1,152}=0\, , \\
\dotfill \\
d\varphi_{5,512}+d\varphi_{5,523}+d\varphi_{5,531}=0\, .
\end{array} \right\}
\label{*}
\end{equation}
The skipped lines in~\eqref{*} correspond to going around vertices $2$, $3$ and~$4$ along the boundary faces in the same obvious way as the first and the last lines correspond to going around vertices $1$ and~$5$.

It follows from \eqref{phi0} and~\eqref{phi1} that there are $3n$ independent conditions among the $5n$ conditions in~\eqref{*}. Thus, there is also a $3n$-dimensional space of boundary~$d\varphi$ lying in the kernel of~$f_3$ modulo inner~$d\varphi$. Thus, \emph{any of the $6n$ ``boundary'' columns of matrix~$f_3$ is a linear combination of just $3n$ of them modulo ``inner'' columns, and the coefficients in this linear combination are the same for $(f_3^{\rm full})_{\rm l.h.s.}$ and~$(f_3^{\rm full})_{\rm r.h.s.}$}.

This yields immediately the desired proportionality of minors. To calculate the coefficient, it is enough to take any specific minor in~$(f_3^{\rm full})_{\rm l.h.s.}$ and the corresponding minor in~$(f_3^{\rm full})_{\rm r.h.s.}$; this becomes an easy exercise if we take minors containing our $n\times n$ blocks only as a whole.
\end{proof}

\subsection{Tentative invariant}
\label{ss:ti}

It folows from~\eqref{pmatr} that the following function of anticommuting variables at boundary faces is invariant under moves $2\leftrightarrow 3$:
\begin{equation}
\pm \frac{\prod\nolimits^{(\rm f)}\det\zeta_{j_2j_3}}{\prod\nolimits^{(\rm e)}\det\zeta_{i_1i_2}\cdot \prod\nolimits^{(\rm t)}\det\zeta_{k_3k_4}}
\int\!\dots\!\int \prod\nolimits^{(\rm t)} \mathbf f_{k_1k_2k_3k_4} \cdot \prod\nolimits^{(\rm f)} \mathcal Da_{j_1j_2j_3}
\label{ti}
\end{equation}
Here:
\begin{itemize}
	\item the product denoted $\prod\nolimits^{(\rm e)}$ is taken over all inner edges~$i_1i_2$,
	\item both products denoted $\prod\nolimits^{(\rm f)}$ are taken over all inner 2-faces~$j_1j_2j_3$, $j_1<j_2<j_3$,
	\item both products denoted $\prod\nolimits^{(\rm t)}$ are taken over all tetrahedra~$k_1k_2k_3k_4$, $k_1<k_2<k_3<k_4$.
\end{itemize}
The sign ``$\pm$'' in~\eqref{ti} corresponds to the fact that it is a separate problem to order the weights~$\mathbf f$ in their product, as well as the integration in different variables; so we just leave~\eqref{ti} defined up to a sign\footnote{which is quite common when the subject is related to torsions, see Section~\ref{s:inv}}.

\subsection{The case $n=1$: reproducing formula~(\ref{fbf})}
\label{ss:n=1}

Take matrix~\eqref{f3t} with $n=1$ and do the following: multiply the first column, corresponding to $d\varphi_{i_1i_2i_3}$, by~$\zeta_{i_2i_3}$, and similarly the other columns by $\zeta_{i_2i_4}$, $\zeta_{i_3i_4}$ and again~$\zeta_{i_3i_4}$, respectively; then divide the second \emph{row} by~$(-\zeta_{i_3i_4})$. The generating function~\eqref{genfun} for such ``gauge transformed'' matrix is nothing but the ``scalar'' weight~\eqref{fbf}. It is now an easy exercise to deduce equation~\eqref{5f} from the general matrix equation~\eqref{pmatr}.

\section{Arbitrary manifold with one-component boundary: torsion and a set of invariants}
\label{s:inv}

As already mentioned in Subsection~\ref{ss:c}, we are considering a triangulated three-dimensional compact oriented connected manifold~$M$ with \emph{one-component} boundary~$\partial M$. It can be shown that if the triangulation does contain inner vertices, the tentative invariant~\eqref{ti} just turns into zero. Moreover, \eqref{ti} is obviously invariant only with respect to moves $2\leftrightarrow 3$, and nothing is known \textit{a priori} about moves $1\leftrightarrow 4$. This is why we are going to construct in this section the invariants in the case where inner vertices are allowed, and prove their invariants under \emph{all} Pachner moves.

We define the torsion of complex~\eqref{c} as
\begin{equation}
\tau = \frac{\minor f_3}{\minor f_2\, \minor f_4},
\label{tau}
\end{equation}
where the minors correspond to some \emph{nondegenerate $\tau$-chain} according to the usual rules~\cite{turaev}; if such $\tau$-chain does not exist, then $\tau=0$.

\begin{theorem}
\label{th:i}
The expression
\begin{equation}
I_{\mathcal C}(M) = \pm \frac{\prod\nolimits^{(\rm f)}\det\zeta_{j_2j_3}}{\prod\nolimits^{(\rm e)}\det\zeta_{i_1i_2}\cdot \prod\nolimits^{(\rm t)}\det\zeta_{k_3k_4}}\cdot \tau ,
\label{inv}
\end{equation}
where the products are defined in the very same way as explained after formula~\eqref{ti}, taken for given subset~$\mathcal C$ of components of boundary~$d\varphi$ as explained after formula~\eqref{c}, is an invariant of manifold~$M$ with the triangulated one-component boundary~$\partial M$.
\end{theorem}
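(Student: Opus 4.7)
By Pachner's theorem, it suffices to verify that $I_{\mathcal C}(M)$ is invariant under each of the four elementary moves $2\leftrightarrow 3$ and $1\leftrightarrow 4$ performed in the interior of $M$, so that the boundary triangulation and hence the set $\mathcal C$ are unchanged. Since the differentials $f_2,f_3,f_4$ are built from strictly local combinatorial data, each move modifies only those rows and columns indexed by simplexes incident to the Pachner cluster. The strategy is to choose a nondegenerate $\tau$-chain block-compatibly, so that the contributions from the ``ambient'' part of $M$ factor out of all three minors and the verification reduces to a local identity relating the ratio of local minors to the ratio of local factors of the $\zeta$-prefactor in~\eqref{inv}.

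For the $2\leftrightarrow 3$ move the Pachner cluster itself carries no inner vertex, so its local pieces $f_2^{\rm loc}$ and $f_4^{\rm loc}$ are trivial and the local torsion reduces to a single minor of the local $f_3^{\rm full}$ (Remark~\ref{r:c}). By Lemmas~\ref{l:mult} and~\ref{l:int} this minor is, up to sign, the Berezin integral appearing on the corresponding side of the pentagon identity~\eqref{pmatr}; the ratio of the two integrals is therefore $\det\zeta_{23}\det\zeta_{45}/(\det\zeta_{34}\det\zeta_{35})$. A short bookkeeping of the $\zeta$-prefactor---the inner face $123$ is replaced by the three inner faces $145,245,345$, the new inner edge $45$ appears, and the two tetrahedra $1234,1235$ are replaced by the three tetrahedra $1245,2345,1345$---shows that the prefactor changes by exactly the reciprocal of this ratio, producing the desired cancellation.

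For the $1\leftrightarrow 4$ move the new inner vertex $5$ makes all three local matrices nontrivial, so a direct argument via~\eqref{pmatr} is not available. My plan here is to compute the local torsion $\minor f_3^{\rm loc}/(\minor f_2^{\rm loc}\minor f_4^{\rm loc})$ for the four-tetrahedron cluster $1235,1245,1345,2345$ from formulas~\eqref{f2}--\eqref{f4}, using the nonlinear picture of Subsection~\ref{ss:macro} as a guide. Because $f_2=dF_2$ and $F_2$ factors through the affine group action on matrix coordinates, a natural $\tau$-chain adapted to $f_2^{\rm loc}$ is built from the differentials of $z_5$ along the four edges through vertex $5$, and an analogous structure for $f_4^{\rm loc}$ is visible through the relations~\eqref{psi0},~\eqref{psi1}. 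The resulting evaluation, possibly checked by a short computer calculation, should yield precisely the ratio of $\det\zeta_{ij}$'s needed to cancel the change of the $\zeta$-prefactor dictated by the four new inner edges $15,25,35,45$, the six new inner faces $125,135,145,235,245,345$, and the replacement of the single tetrahedron $1234$ by four new tetrahedra.

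The main obstacle will be this $1\leftrightarrow 4$ local torsion computation: no pre-packaged identity analogous to~\eqref{pmatr} is available, and the cancellation must be extracted from the explicit block structure of the local $f_2,f_3,f_4$. A secondary technical point is the independence of $\tau$ from the particular choice of $\tau$-chain; this is a standard consequence of Reidemeister--Turaev torsion theory~\cite{turaev} once the complex~\eqref{c} is shown to be acyclic under generic nondegeneracy of the $\zeta_i$, and the same theory accounts for the $\pm$ sign built into the statement of the theorem.
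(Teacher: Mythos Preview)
Your treatment of the $2\leftrightarrow 3$ move is essentially the paper's: only $\minor f_3$ is affected, and the local change is governed by the same minor ratio as in Theorem~\ref{th:mpen}, which the $\zeta$-prefactor absorbs.

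For $1\leftrightarrow 4$, however, you take a different route from the paper, and yours is left incomplete precisely at the point you flag as the ``main obstacle.'' The paper avoids a direct local-torsion computation for the four-tetrahedron cluster altogether. Instead it factors the $1\to 4$ move as a composition
\[
(1\to 4)\;=\;(2\to 3)\circ(0\to 2),
\]
where $0\to 2$ replaces a single inner $2$-face $123$ by a ``triangular pillow'': two oppositely oriented copies of the tetrahedron $1234$ glued along their faces $124,134,234$, with the new vertex $4$ inside. Since $2\to 3$ is already handled, only the much simpler $0\to 2$ move needs to be analysed. For this move the paper extends each of $\minor f_2$, $\minor f_3$, $\minor f_4$ by a single well-chosen block row/column (for $f_2$: row $d\varphi_{124}$, column $dz_4$; for $f_4$: row $d\chi_4$, column $d\psi_{2,1234}^{(2)}$; and three block rows/columns for $f_3$). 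Block-triangularity then gives the multiplicative factors immediately, and $\tau$ changes by $\det\zeta_{14}\det\zeta_{34}/\det\zeta_{23}$, which matches the change in the $\zeta$-prefactor.

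Your direct approach---computing the full local torsion of the four-tetrahedron star of vertex $5$---is in principle workable, but you have not carried it out, and the combinatorics (four new inner edges, six new inner faces, four tetrahedra, one new inner vertex contributing to both $f_2$ and $f_4$) is substantially heavier than the $0\to 2$ pillow. The decomposition through $0\to 2$ is the missing idea that turns this obstacle into a one-line block-triangular calculation.
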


In other words, $I_{\mathcal C}(M)$ does not change under moves $2\leftrightarrow 3$ and $1\leftrightarrow 4$ within~$M$, not affecting the fixed triangulation of~$\partial M$.

\begin{proof}
A move $2\leftrightarrow 3$ changes only $\minor f_3$ in~\eqref{tau}. The factor by which $\minor f_3$ is multiplied is determined in essentially the same way as in the proof of Theorem~\ref{th:mpen}; the additional tetrahedra (with respect to the siuation where there were just two tetrahedra in the l.h.s.\ and three in the r.h.s.) do not affect this factor.

As for the move $1\to 4$, it can be considered as a composition of moves $0\to 2$ and $2\to 3$, where $0\to 2$ means that we take an inner 2-face let it be face~$123$, and glue in its place two oppositely oriented tetrahedra, say~$1234$, in such way that they are glued to each other by their respective faces $124$, $134$ and~$234$. Thus, old face~$123$ is replaced with a ``triangular pillow'' with the new vertex~$4$ inside.

We thus consider this move $0\to 2$. One possibility of changing the minors in~\eqref{tau} under this move is as follows:
\begin{itemize}
	\item extend $\minor f_2$ by block row corresponding to $d\varphi_{124}$ and block column corresponding to~$dz_4$,
	\item extend $\minor f_3$ by block rows corresponding to $d\psi_{1,1234}^{(1)}$, $d\psi_{2,1234}^{(1)}$ and $d\psi_{1,1234}^{(2)}$, where superscript~$(1)$ indicates one tetrahedron and~$(2)$ --- the other, and block columns corresponding to $d\varphi_{134}$, $d\varphi_{234}$ and one of the two~$d\varphi_{123}$,
	\item extend $\minor f_4$ by block row corresponding to $d\chi_4$ and block column corresponding to~$d\psi_{2,1234}^{(2)}$.
\end{itemize}
Standard argument using block triangularity shows that the three respective minors are thus multiplied by $\frac{\mathcal D\varphi_{124}}{\mathcal Dz_4}$, $\frac{\mathcal D\psi_{1,1234}^{(1)}\wedge \mathcal D\psi_{2,1234}^{(1)}\wedge \mathcal D\psi_{1,1234}^{(2)}}{\mathcal D\varphi_{134}\wedge \mathcal D\varphi_{234}\wedge \mathcal D\varphi_{123}}$, and $\frac{\mathcal D\chi_4}{\mathcal D\psi_{2,1234}^{(2)}}$, where $\mathcal D$ means the exterior product of differentials of $n$ components of the respective quantity. The first of these quantities is computed using \eqref{f2}, the second --- \eqref{f3t}, and the last --- \eqref{f4} together with \eqref{psi0} and~\eqref{psi1}. The result is that $\tau$ is multiplied (up to a sign) by~$\frac{\det\zeta_{14}\,\det\zeta_{34}}{\det\zeta_{23}}$. One can see that, miraculously, this agrees with how the products in formula~\eqref{inv} change.
\end{proof}

Of course, in the case of no inner vertices, the invariants~\eqref{inv} are nothing but coefficients at the products of anticommuting variables in~\eqref{ti}, thus we have proved that these coefficients are topological invariants --- provided a triangulation with no inner points\footnote{and of course not containing edges starting and ending at the same vertex} exists.

\section{Discussion}
\label{s:d}

We have constructed the first ever solution of pentagon equation with anticommuting variables and incorporating, in an essential  way, the noncommutative matrix multiplication; this can be seen in formula~\eqref{f3t} from which the tetrahedron weight is made according to~\eqref{genfun}. We also showed in Section~\ref{s:inv}, on the example of as simple algebraic complexes as we could invent, how the obtained invariants are related to the torsion of acyclic complexes. This also showed the good behavior of our invariants with respect to Pachner moves $1\leftrightarrow 4$ (while the pentagon equation dealt only with moves $2\leftrightarrow 3$).

We plan to write another, and longer, article, containing interesting calculations, especially for ``twisted'' complexes (like those in~\cite{M1,M2}, but for manifolds with boundary), and other material such as the generalization of our complex~\eqref{c} for the case of boundary having any number of components.

Moreover, it turns out that complex~\eqref{c} admits a rather straightforward generalization onto \emph{four}-dimensional manifolds --- this will be the theme of separate research.

The existence of invariants like~\eqref{inv}, with a factor, multiplicative in some values belonging to simplexes of triangulation, multiplied by a Reidemeister-type torsion, always comes as a miracle. The point is that we first construct a complex like~\eqref{c} (already guided by some not very formal ideas), and complex~\eqref{c} belongs to a \emph{fixed} triangulation of a manifold~$M$. It always turns out, however, that the torsion of such a complex behaves beautifully under all types of Pachner moves \emph{changing} the triangulation.

To conclude, we remark that coming from a ``na\"{\i}ve'' state-sum invariant like \eqref{s} or~\eqref{ti} that turns in many cases into zero, to invariants involving torsion can be considered as a sort of renormalization procedure. This procedure introduces new variables $dz_i$ and~$d\chi_i$, and, in physics, such variables may correspond to new physical entities. This raises an interesting question of possible relations between acyclic complexes and renormalization.

\subsection*{Acknowledgements}

One of the authors (I.K.) thanks Irina Aref'eva and all the organizers for the great possibility of making a report at the conference SFT'09, and for their warm hospitality.

The idea of using formulas like \eqref{F1}, \eqref{F2} and~\eqref{F3} was suggested to I.K. by Rinat Kashaev~\cite{kash-pr}. We would like to express him our gratitude for this suggestion.

The work of I.K.\ was supported in part by the Russian Foundation for Basic Research (Grant No. 07-01-00081a).

\begin {thebibliography}{99}

\bibitem{B} F.A.~Berezin, Introduction to superanalysis. Mathematical Physics and Applied Mathematics, vol.~9, D.~Reidel Publishing Company, Dordrecht, 1987.

\bibitem{bkm}
S.I. Bel'kov, I.G. Korepanov, E.V. Martyushev, A simple topological quantum field theory for manifolds with triangulated boundary, arXiv:0907.3787v1 (2009).

\bibitem{kash-pr}
R.M. Kashaev, private communication (2006).

\bibitem{k-jnmp-2001}
I.G. Korepanov, Invariants of PL manifolds from metrized simplicial complexes. Three-dimensional case, J. Nonlin. Math. Phys., vol.~8 (2001), no.~2, 196--210.

\bibitem{tqft}
I.G. Korepanov, Geometric torsions and invariants of manifolds with a triangulated boundary, Theor. Math. Phys., vol.~158 (2009), 82--95.

\bibitem{tqft2}
I.G. Korepanov, Geometric torsions and an Atiyah-style topological field theory, Theor. Math. Phys., vol.~158 (2009), 344--354.

\bibitem{lickorish}
W.B.R. Lickorish, Simplicial moves on complexes and manifolds, Geom. Topol. Monographs \textbf{2} (1999), 299--320.

\bibitem{M1} 
E.V.~Martyushev, Euclidean simplices and invariants of three-manifolds: a modification of the invariant for lens spaces, Proceedings of the Chelyabinsk Scientific Center 19 (2003), No. 2, 1--5.

\bibitem{M2} 
E.V.~Martyushev, Euclidean geometric invariants of links in 3-sphere, Proceedings of the Chelyabinsk Scientific Center 26 (2004), No.~4, 1--5.

\bibitem{PL_and_Top}
E. Moise, Affine structures in 3-manifolds, V, Ann. of Math., \textbf{56} (1952), 96--114.

\bibitem{pachner}
U. Pachner, PL homeomorphic manifolds are equivalent by elementary shellings, Europ. J. Combinatorics, \textbf{12} (1991), 129--145.

\bibitem{turaev}
V.G. Turaev, Introduction to combinatorial torsions, Boston: Birkh\"auser, 2000.

\end{thebibliography}

\end{document}